\newtheorem{theorem}{Theorem}
\newtheorem{lem}{Lemma}
\begin{document}

\title{Braiding defects in topological stabiliser codes of any dimension cannot be universal}
\author{Paul Webster}
\affiliation{Centre for Engineered Quantum Systems, School of Physics, The University of Sydney, Sydney, NSW 2006, Australia}
\author{Stephen D.~Bartlett}
\affiliation{Centre for Engineered Quantum Systems, School of Physics, The University of Sydney, Sydney, NSW 2006, Australia}

\date{10 August 2020}

\begin{abstract}
Braiding defects in topological stabiliser codes has been widely studied as a promising approach to fault-tolerant quantum computing. We present a no-go theorem that places very strong limitations on the potential of such schemes for universal fault-tolerant quantum computing in any spatial dimension. In particular, we show that, for the natural encoding of quantum information in defects in topological stabiliser codes, the set of logical operators implementable by braiding defects is contained in the Clifford group. Indeed, we show that this remains true even when supplemented with locality-preserving logical operators.
\end{abstract}

\maketitle
Topological stabiliser codes are a promising approach to protecting quantum information, as they possess high thresholds and allow for the correction of very general local errors.  In addition, any quantum logic gates that can be performed with locality-preserving logical operators (a generalisation of transversal operators) are fault-tolerant, meaning that local errors remain local and correctable. Quantum computing requires a universal set of fault-tolerant logic gates, however, and the set of locality-preserving logical operators in a topological stabiliser code cannot be universal \cite{Eastin,Bravyi,Pastawski,Webster}.

An alternative approach to performing fault-tolerant logic in topological codes involves braiding \emph{topological defects}.  Defects can have a richer braiding structure compared with the anyonic excitations of the topological code.  A number of schemes to encode qubits into defects and implement gates by braiding these defects have been proposed and explored for codes in two spatial dimensions.  One class of encodings uses holes in topological stabilizer codes~\cite{Raussendorf1,Raussendorf2,BombinHole,Fowler1,Fowler2,Fowler3,Hastings,Brown,Brell}.  Such codes can admit entangling gates by braiding, and schemes based on supplementing these gates by locality-preserving logical operators have been proposed for both the two-dimensional surface and colour codes~\cite{Fowler1,Fowler2};  however, braiding holes alone cannot give a universal set of logical operators in these codes~\cite{Escobar}.  Focussing on more exotic topological defects, encodings into twist defects that lie on the boundary of domain walls have also been widely studied due to their similarity to non-abelian anyons \cite{Brown,Bombin,Bombin2,Kesselring,Cong,Yoder,Scruby,BarkeshliGenon1,BarkeshliGenon2,BarkeshliTheory1,BarkeshliTheory2}. Fault-tolerant logic with such twist defects has been studied in a range of two dimensional codes, including the surface code~\cite{Brown,Bombin}, colour code~\cite{Kesselring}, subsystem colour codes~\cite{Bombin2} and the $\mathbb{Z}_3$ quantum double model~\cite{Cong} (a generalisation of the surface code to qutrits). As with holes, braiding twist defects in these codes does not give a universal gate set.

Because braiding defects has many similar features to braiding non-abelian anyons even in abelian topological models~\cite{Bombin,Brown}, and because some non-abelian braiding models allow for universal gate sets~\cite{Freedman,Kitaev}, it is natural to question whether there exist (abelian) topological stabiliser codes with defects that similarly allow for universality through braiding. The idea seems plausible, as there are examples of topological models that do not admit a universal set of fault-tolerant logical operators, but which do allow for universality when genons (a type of twist defect) are introduced and braided~\cite{BarkeshliGenon1}.

In this Letter, we prove that braiding defects in topological stabiliser codes in any dimension cannot be universal. Specifically, we show that the set of operators obtained by braiding topological defects in any topological stabiliser code of any spatial dimension is contained in the Clifford group. We show that this constraint remains even when such operators are supplemented by locality-preserving logical operators. Our result applies to any code in which logical qubits are encoded into pairs of topological defects -- of any type -- with states of the qubit defined by the excitations present at the defects, and a process to distinguish these states by braiding another excitation around each defect. Such an encoding is the natural approach using defects, generalising two-dimensional topological stabiliser code defect schemes including those using holes \cite{Raussendorf1,Raussendorf2,BombinHole,Fowler1,Fowler2,Fowler3,Hastings,Brown} and pairs of domain walls with twists \cite{Brown,Bombin,Bombin2,Kesselring,Scruby} while also allowing for a range of phenomena specific to higher dimensional codes. In our companion paper \cite{WebsterLong}, we show that even if we permit more exotic encodings, our main result -- that braiding defects in topological stabiliser codes cannot be universal -- remains unchanged.

\textit{Defects in Topological Stabiliser Codes.---}A topological stabiliser code is a stabiliser code in which physical qubits are arranged in a lattice of ${D}\geq 2$ spatial dimensions such that the stabiliser group admits a local generating set and logical information is encoded in topological degrees of freedom  \cite{Bravyi,Pastawski,Webster,Kitaev2,BravyiKitaev,Bombin3,Bombin5,Bombin4,TerhalRev}. To allow for a notion of defects, we assume a fundamental (defect-free) code that is translationally invariant. A defect is defined to be a $k$-dimensional region of this code where translational invariance is broken, with $0\leq k<{D}$. 

Two dimensional topological stabiliser codes admit anyons -- localised excitations that may be distinguished by their fusion and braiding rules. In higher dimensional codes, anyons generalise to the class of \textit{topological excitations}. These are excitations that admit an analogous notion of braiding. Specifically, there must be at least two spatial dimensions in which such excitations are localised and be freely propagated by a locality-preserving unitary \cite{WebsterLong}.

We refer to a defect at which topological excitations can condense (i.e.~be created by local operators at the defect) as a \emph{topological defect}, as this condensation allows the defect to carry topological charge. (To ensure that the defects are fundamentally topological, topological excitations are the only objects we allow to condense.) For example, in the surface code, holes created with rough or smooth boundaries can condense electric or magnetic charge, respectively, and so are topological defects.  Transparent domain walls, such as a lattice dislocation in the surface code, form topological defects referred to as \textit{twists} at their boundaries, again allowing excitations to condense. Note that the entire boundary of the domain wall, for example a pair of points for a one dimensional wall, or a loop for a two dimensional wall, is considered to be a single topological defect. 

Because topological defects allow for topological excitations to condense, they can be used to encode protected quantum information. Here, we consider encodings such that each logical qubit is encoded in a pair of topological defects, with the computational basis states of the qubit corresponding to the presence or absence of a topological excitation, $a$, localised at one or other of the defects. (For clarity, we assume that the relevant excitation fuses with a copy of itself to return to the vacuum, so that the encoding is of a qubit, but this can easily be generalised to more general excitations and qudits). We also assume that there exists a topological excitation $b$ that gives a phase of $-1$ when braided with $a$. This ensures that it is possible to distinguish the computational basis states, by braiding $b$ around one of the defects.

This encoding has the important property that its logical Pauli operators are \textit{topological locality-preserving logical operators} (TLPLOs) \cite{WebsterLong}. A TLPLO is a logical operator that admits an implementation by which one or more topological excitations are created, propagated along a non-trivial path through the code and then annihilated. Specifically, the logical $\bar{X}$ operator for an encoded qubit corresponds to the creation of the topological excitation $a$, its propagation through the code and then its annihilation at the other encoding defect. The $\bar{Z}$ operator corresponds to the braiding of topological excitation $b$ around either of the defects. The property that logical Pauli operators are TLPLOs is shared by all standard approaches to encoding information in topological stabiliser codes (with \cite{Raussendorf1,Raussendorf2,BombinHole,Fowler1,Fowler2,Fowler3,Hastings,Brown,Bombin,Bombin2,Kesselring,Scruby} or without \cite{Webster,Kitaev2,BravyiKitaev,Bombin3,Bombin5,Bombin4,TerhalRev} defects) in all spatial dimensions. 

Encoding logical information into defects in this way naturally generalises two dimensional defect encodings, while also allowing for the range of exotic features that arise only in higher dimensional codes to be manifested. In particular, we note the following possibilities that are included in the setup we consider, but are not possible in two dimensional codes:
\begin{enumerate}
\item Defects do not need to be point-like, but can include objects such as loop-like or line-like twist defects \cite{WebsterLong};
\item Defects may be associated with non-Pauli Hamiltonian terms. For example we allow for domain walls that arise at the boundary of non-Clifford locality-preserving logical operators in higher dimensional code, such as that in the three-dimensional colour code corresponding to the transversal $\bar{T}$ gate, which has non-Pauli $XS$-type Hamiltonian terms \cite{Yoshida,Ni};
\item The topological excitations that condense at the defects can include non-eigenstate excitations such as SPT-excitations \cite{Yoshida} or linking charges \cite{BombinLinking}, and we allow for TLPLOs that are implemented by the creation, propagation and annihilation of such excitations;
\item The underlying code can exhibit fracton order \cite{Vijay,Dua}, and the excitations of limited mobility characteristic of such models, such as planons, may condense at defects. We refer the reader to our companion paper \cite{WebsterLong} for an example.
\end{enumerate}

With such encodings, we have two distinct ways to implement fault-tolerant logical operations. The first is to apply a locality-preserving logical operator; these are unitary operators that map local (and hence correctable) errors to local  errors. We note that TLPLOs are a subset of locality-preserving logical operators, as the propagation of topological excitations (unlike defects) can be performed using locality-preserving unitary operators~\cite{WebsterLong}. Specifically, $\bar{X}$ operators correspond to a product of operators applied to a manifold connecting a pair of defects, and $\bar{Z}$ operators correspond to a product of operators applied to a topologically non-trivial manifold enclosing a defect. Indeed, the group of TLPLOs is a normal subgroup of the group of locality-preserving logical operators, since locality-preserving logical operators map TLPLOs to TLPLOs \cite{WebsterLong}. Strong constraints on the locality-preserving logical operators admitted by a topological stabiliser code are known~\cite{Bravyi,Pastawski}, as is a full classification for a large class of the most widely studied instances~\cite{Webster}.

Alternatively, fault-tolerant logical operators may be implemented by \textit{braiding} defects. We define such a \emph{braiding logical operator} to be a process of code deformation \cite{Brown,Bombin5}, by which the set of qubits on which one or more defects have support are smoothly changed over time, such that the final defect configuration is indistinguishable from the initial configuration. We require that the setup remains a valid defect encoding at all times throughout the process. Specifically, the distance of the encoding (i.e.~minimum weight of a logical Pauli operator) must be maintained up to a constant. This ensures that the protection against local errors provided by the initial defect encoding is maintained throughout, and hence that braiding logical operators are fault-tolerant.

\textit{Results.---}We now present our main result (Theorem \ref{Theorem1}), that combinations of braiding and locality-preserving logical operators are contained in the Clifford group. We prove this result with two lemmas:  first, showing that braiding logical operators map TLPLOs to TLPLOs (Lemma \ref{Lem2}) and second, that all TLPLOs are logical Pauli operators (Lemma \ref{Lem1}).

\begin{lem}
Let $\bar{B}$ be a braiding logical operator, and $\bar{U}$ be a TLPLO. Then, $\bar{B}\bar{U}\bar{B}^\dag$ is a TLPLO. \label{Lem2}
\end{lem}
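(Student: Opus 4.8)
The plan is to characterize a TLPLO $\bar U$ by the topological data of its implementation --- which topological excitations it creates and annihilates, and along which homology classes of paths/manifolds they propagate --- and then track how this data transforms under conjugation by the braiding logical operator $\bar B$. The key point is that $\bar B$ is implemented by code deformation: it is a finite sequence of local moves that adiabatically drag the defects along some trajectory and return the defect configuration to one indistinguishable from the start. I would first establish that conjugating $\bar U$ by $\bar B$ has a concrete operational meaning: run $\bar B^\dagger$, then create--propagate--annihilate the excitations of $\bar U$, then run $\bar B$. Since $\bar B^\dagger$ and $\bar B$ together form a closed loop in the space of defect configurations, the net effect on the code is trivial, but the excitation-propagation process of $\bar U$ gets ``carried along'' by the deformation.

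Next I would argue that this carried-along process is again of the create--propagate--annihilate form required for a TLPLO. The essential observations are: (i) code deformation is locality-preserving at each step (this is exactly the fault-tolerance requirement, that the code distance is preserved up to a constant), so it maps the locality-preserving unitary implementing $\bar U$'s propagation to another locality-preserving unitary; (ii) because the deformation moves defects continuously, it maps the support of $\bar U$'s implementation --- a manifold connecting defects, or a manifold enclosing a defect --- to a deformed manifold that still connects or encloses the (relocated, then returned) defects; and (iii) the excitations $a$, $b$ being propagated are topological excitations, and a locality-preserving unitary maps topological excitations to topological excitations (or trivial ones), so the deformed process still creates, propagates, and annihilates topological excitations. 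Combining these, $\bar B\bar U\bar B^\dagger$ admits an implementation of exactly the form defining a TLPLO.

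The main obstacle --- and where the argument needs care --- is step (ii): ensuring that the deformed support is still a \emph{valid} connecting/enclosing manifold, i.e.~that the homological relationship between the excitation paths and the defects is preserved, and in particular that the path does not become contractible (which would make $\bar U$ trivial rather than merely mapped to another TLPLO, which is acceptable, but one must rule out the path becoming ill-defined or the excitation failing to return to the vacuum). Here I would lean on the fact that $\bar B$ returns the defect configuration to an indistinguishable one, so the ambient topology seen by the excitations is unchanged up to the equivalence the code cannot detect; thus non-trivial homology classes are permuted among themselves. A secondary subtlety is that during the intermediate stages of $\bar B$ the defects are in a different configuration, so one must check that ``create--propagate--annihilate'' still makes sense relative to the moving defects; this is handled by noting the deformation is a composition of steps each of which preserves the TLPLO structure, and TLPLOs are closed under composition. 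I would defer the detailed homological bookkeeping and the precise definition of ``indistinguishable configuration'' to the companion paper \cite{WebsterLong}, citing it for the technical lemmas on locality-preserving unitaries acting on topological excitations.
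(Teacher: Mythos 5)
Your overall strategy matches the paper's: decompose $\bar{B}$ into a sequence of local deformation steps and show by induction/composition that each step preserves the TLPLO property. However, the inductive step itself is where the content of the lemma lives, and your proposal asserts it rather than proves it. You correctly identify point (ii) --- that the deformed support is still a valid connecting/enclosing manifold of the right form --- as the crux, but the two resolutions you offer do not close it. Appealing to the fact that $\bar{B}$ returns the defects to an indistinguishable configuration only constrains the endpoints of the braid, not what happens to the excitation-propagation process at each intermediate stage; and ``the deformation is a composition of steps each of which preserves the TLPLO structure'' is precisely the statement to be established, not a justification for it. A related slip: you attribute the locality of each deformation step to the distance-preservation requirement, but distance preservation is a separate condition (it guarantees fault tolerance of the braid); the locality of each step comes from the smoothness of the deformation, which lets one assume each $B_k$ acts only in a local region around the defects.

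The missing idea is the paper's two-part move in the inductive step. First, because topological excitations can be freely propagated in at least two dimensions, the TLPLO $\bar{U}_{k-1}$ can be deformed to an equivalent TLPLO $\bar{U}'_{k-1}$ whose support meets the support of $B_k$ \emph{only} in local neighbourhoods of the topological defects where excitations condense --- so most of the implementation is simply untouched by $B_k$, and there is no global ``carrying along'' of the manifold to analyse. Second, at the unavoidable overlap near the defects, the defining property of a \emph{topological} defect --- that topological excitations condense there --- guarantees that whatever $B_k$ does in that local region can be absorbed into a (possibly different) condensation event, so the conjugated operator is still implemented by condensation, propagation and annihilation of topological excitations, hence still a TLPLO; equivalence up to local operators then gives equivalence of logical action. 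Your observations (i) and (iii) concern locality-preserving unitaries acting on a fixed code and do not substitute for this, since each $B_k$ maps between different codes $\mathcal{C}_{k-1}\to\mathcal{C}_k$ and its action near a moving defect is not a priori of create--propagate--annihilate form. Without invoking the condensation property of topological defects --- the one place the definition of ``topological defect'' enters the proof --- the argument does not go through.
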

\begin{proof}
$\bar{B}$ can be decomposed into a product of code deformations, $\bar{B}=\prod_{i=1}^n B_i$, with a sequence of codes $\{\mathcal{C}_i\}_{0\leq i \leq n}$ (where $\mathcal{C}_0=\mathcal{C}_n$) such that $B_i: \mathcal{C}_{i-1} \to \mathcal{C}_i$. Since braiding is smooth, we may assume that each $B_i$ acts non-trivially only on a local region around the topological defects of code $\mathcal{C}_{i-1}$. Let $\bar{U}_k:\mathcal{C}_k \to \mathcal{C}_k$ (for $0\leq k \leq n$) be defined by $\bar{U}_k= \left(\prod_{i=1}^{k} B_i\right) \bar{U} \left(\prod_{i=1}^{k} B_i\right)^\dag$. Noting that $\bar{U}_0=\bar{U}$ is a TLPLO, we proceed by induction to show that $U_k$ is a TLPLO for all $k\leq n$.

Assume that $\bar{U}_{k-1}$ is a TLPLO. Then, since topological excitations can be freely propagated (within a subspace of dimension at least two), $\bar{U}_{k-1}$ can be deformed to an equivalent TLPLO, $\bar{U}'_{k-1}$ such that the support of $\bar{U}'_{k-1}$ and the support of $B_k$ intersect only in local regions around topological defects at which topological excitations condense in the implementation of $\bar{U}_{k-1}$. We note that since topological defects allow topological excitations to condense, the implementation of $\bar{U}'_{k-1}$ may require the condensation of additional topological excitations, but is nonetheless still implementable by the condensation and propagation of some set of topological excitations and is thus a TLPLO.
Now, $B_k$ acts trivially on $\bar{U}'_{k-1}$ except for in local regions around topological defects at which topological excitations condense in the implementation of $\bar{U}'_{k-1}$. Thus, $\bar{U}_k$ is equivalent up to local operators to a TLPLO implemented by the condensation of topological excitations at the corresponding topological defects to those at which topological excitations condensed in the implementation of $\bar{U}'_{k-1}$ and the same propagation of topological excitations as $\bar{U}'_{k-1}$ elsewhere. Since any two logical operators that are equivalent up to local operators must have equivalent logical action, this implies that $\bar{U}_k$ is equivalent to this TLPLO.

Hence, $\bar{U}_k$ is a TLPLO for all $k\leq n$ and so, in particular, $\bar{U}_n=\bar{B}\bar{U}\bar{B}^\dag$ is a TLPLO.
\end{proof}

\begin{lem}
All TLPLOs are logical Pauli operators. \label{Lem1}
\end{lem}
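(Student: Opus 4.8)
The plan is to show that an arbitrary TLPLO $\bar{U}$ must commute, up to a sign, with every logical Pauli operator, and then to invoke the non-degeneracy of the symplectic commutation form on the logical Pauli group to conclude that $\bar{U}$ is itself a logical Pauli operator. First I would recall that, by the hypothesis on the encoding, each logical $\bar{X}_i$ is a TLPLO implemented by creating the excitation $a$, propagating it between the two encoding defects and annihilating it, and each $\bar{Z}_i$ is a TLPLO implemented by braiding $b$ around one of the defects; so it suffices to compare $\bar{U}$ with these generators. I would then use the fact that the logical action of a TLPLO depends only on the topological class of the paths traced by its condensed excitations relative to the defects: propagating an excitation through a bounded, defect-free region, or creating and immediately reannihilating a pair of excitations there, is effected by a locally supported operator, which cannot change the logical action in a topological code.

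Using this freedom, given a logical Pauli generator $\bar{P}\in\{\bar{X}_i,\bar{Z}_i\}$, I would deform the implementations of $\bar{U}$ and $\bar{P}$ so that their supports meet only in a neighbourhood of the (transverse, widely separated) intersections of the excitation paths. The commutator $\bar{U}\bar{P}\bar{U}^\dagger\bar{P}^{-1}$ is then supported within that neighbourhood; being a logical operator of bounded weight in a code whose distance we require to remain large, it must act as a scalar, and since $\bar{P}^2=\bar{I}$ this scalar squares to $1$. Hence $\bar{U}\bar{P}\bar{U}^\dagger=\epsilon(\bar{P})\,\bar{P}$ with $\epsilon(\bar{P})\in\{\pm 1\}$. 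The sign $\epsilon(\bar{P})$ is multiplicative in $\bar{P}$, so it descends to a character of the logical Pauli group modulo phases, which is the elementary abelian $2$-group $\mathbb{Z}_2^{2n}$ equipped with its non-degenerate symplectic commutator pairing. By non-degeneracy there is a logical Pauli operator $\bar{Q}$ with $\bar{Q}\bar{P}\bar{Q}^{-1}=\epsilon(\bar{P})\,\bar{P}$ for all logical Paulis $\bar{P}$; then $\bar{Q}^{-1}\bar{U}$ commutes with every logical Pauli, and since the logical Pauli group acts irreducibly on $(\mathbb{C}^2)^{\otimes n}$ it is a scalar multiple of the identity. Thus $\bar{U}$ equals $\bar{Q}$ up to a phase, and is a logical Pauli operator.

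The step I expect to be the main obstacle is the claim that the commutator of the two Wilson-type operators is an abelian scalar rather than a nontrivial operator localised near their overlap: this is precisely where the abelian, stabiliser nature of the code is essential — in a non-abelian model the analogous object would be a nontrivial braiding operator, which is how universality arises there. Making this rigorous requires controlling the more exotic implementations permitted in our setting — propagation of non-eigenstate excitations such as SPT excitations or linking charges, defects carrying non-Pauli Hamiltonian terms, and condensation of limited-mobility excitations in fracton codes — and verifying in each case that the local commutator reduces to a phase. I would carry out (or, as in the companion paper, cite) this case analysis, while noting that in the simplest setting — a Pauli stabiliser code with eigenstate excitations — the argument short-circuits, since there every such Wilson-type operator is literally a Pauli operator and hence so is any TLPLO built from them.
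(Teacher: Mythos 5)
Your endgame (the $\pm1$ character $\epsilon$, non-degeneracy of the symplectic pairing, and Schur's lemma) is sound and would be a clean alternative to the paper's final step. But the load-bearing claim --- that after deforming $\bar{U}$ and $\bar{P}$ so their supports meet only near transverse crossings, the group commutator $\bar{U}\bar{P}\bar{U}^\dagger\bar{P}^{-1}$ is supported in a bounded neighbourhood of those crossings and hence is a scalar --- is false for a general unitary $\bar{U}$, and it fails in exactly the cases the lemma must exclude. The group commutator of a unitary with a Pauli is only guaranteed to be supported on $\mathrm{supp}(\bar{U})$, not on the intersection of supports: a chain of CNOTs meeting a single $Z$ at one site returns a $Z$ at the far end of the chain. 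This ``spreading'' is not a pathology but the very mechanism by which non-Clifford locality-preserving operators and higher-level domain walls act in topological codes (e.g.\ the non-Pauli $XS$-type walls of the 3D colour code, which the setup explicitly admits): there the commutator with a crossing Pauli string is a macroscopic string emanating from the crossing, not a low-weight operator, so the distance argument gives you nothing. Even for genuinely Clifford objects such as an $e\leftrightarrow m$ transparent wall, the commutator with a crossing string is itself a string --- it merely happens to be a \emph{Pauli} string, which is what makes it tractable. Your closing paragraph correctly senses that this is the hard step, but locates the difficulty in exotic excitation types; the failure is more basic and already occurs for ordinary non-Clifford unitaries. Likewise, the ``short-circuit'' for Pauli stabiliser codes assumes what is to be proved: a TLPLO is any logical operator implementable by condensing and propagating topological excitations, and nothing in the definition forces that implementation to be a Pauli circuit.

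The paper's proof is structured precisely to circumvent this. It first shows $\bar{U}$ lies in the $D$th level of the Clifford hierarchy, by lifting a putative violating sequential commutator to the associated defect-free code and invoking the Bravyi--K\"onig bound; it then inducts downward through the hierarchy until $\bar{U}$ may be assumed Clifford. Only at that point is $\bar{U}\bar{P}\bar{U}^\dagger\bar{P}^{-1}$ known to be a logical \emph{Pauli}, i.e.\ a string or membrane operator whose possible logical class can be constrained by the topology of $\mathrm{supp}(\bar{P})\cap\mathrm{supp}(\bar{U})$ relative to the defects (it can be arranged to neither connect the defect pair nor enclose either defect, hence is trivial). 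To repair your argument you would need to import both of these ingredients --- the Clifford-hierarchy bound via the defect-free code and the downward induction --- after which your character/symplectic argument is a fine substitute for the paper's final paragraph, but is no longer doing the essential work.
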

\begin{proof}
Let $\bar{U}$ be a TLPLO in a $D$-dimensional code. We show first that $\bar{U}$ must be contained in the Clifford hierarchy (indeed in the $D$th level of the hierarchy) and then proceed to show it must in fact be in the Pauli group by induction on the Clifford hierarchy.

If $\bar{U}$ is outside the $D$th level of the Clifford hierarchy, then there exists a set of $D$ Pauli TLPLOs, $\{\bar{P}_1,\ldots \bar{P}_{D}\}$, whose sequential group commutator with $\bar{U}$ is non-trivial \cite{Pastawski,Yoshida2}. This implies that the corresponding defect-free code with periodic boundary conditions has a corresponding set of $D+1$ TLPLOs implemented by propagating the same topological excitations as for $\bar{U}$ and $\{\bar{P}_1,\ldots \bar{P}_{D}\}$ around a topologically non-trivial path whose sequential group commutator is also non-trivial. This implies that the defect-free code has a locality-preserving logical operator outside the $D$th level of the Clifford hierarchy \cite{Pastawski,Yoshida2}, which contradicts Ref.~\cite{Bravyi}.

We now prove the result by induction on the Clifford hierarchy. Specifically, assume all TLPLOs in the $k$th level of the Clifford hierarchy are logical Pauli operators. Then, if $\bar{U}$ is in the $(k+1)$th level of the Clifford hierarchy, its commutator with all single qubit logical Pauli operators must be logical Pauli operators, and so $\bar{U}$ must be in the Clifford group. We thus need only to prove the case where $\bar{U}$ is a Clifford TLPLO.

In this case, for an arbitrary logical qubit, $j$, encoded in a pair of topological defects, $\mathcal{D}$ and $\mathcal{D}'$, there exist logical Pauli operators $\bar{P}= \alpha\prod_i \bar{X}_i^{a_i}\bar{Z}_i^{b_i}$ and $\bar{Q}=\beta\prod_i \bar{X}_i^{c_i}\bar{Z}_i^{d_i}$ (for $\alpha,\beta \in \mathbb{C}$ and $a_i,b_i,c_i,d_i \in \mathbb{Z}_2$) such that $P=\bar{X}_j\bar{U}\bar{X}_j\bar{U}^\dag$ and $Q=\bar{Z}_j\bar{U}\bar{Z}_j\bar{U}$. Since the support of the group commutator of two operators must be contained in their intersection, this implies that $\text{supp}(\bar{P}) \subseteq \text{supp}(\bar{X}_j) \cap \text{supp}(\bar{U})$ and $\text{supp}(\bar{Q}) \subseteq \text{supp}(\bar{Z}_j) \cap \text{supp}(\bar{U})$

However, by construction, $\text{supp}(\bar{X}_j)$ does not enclose any defect in a topologically non-trivial manifold and $\text{supp}(\bar{Z}_j)$ does not connect any defect pairs, so $b_i=c_i=0$ for all $i$. Also, $\text{supp}(\bar{X}_j)$ does not connect any pair of defects except for $\mathcal{D}$ and $\mathcal{D}'$ and $\text{supp}(\bar{Z})$ also does not enclose any pair of defects except for $\mathcal{D}$ or $\mathcal{D}'$, so $a_i=d_i=0$ for all $i\neq j$. Additionally, since $\bar{U}$ is a TLPLO, it corresponds to the path of a topological excitation that necessarily can be freely moved in some direction perpendicular to that in which $\mathcal{D}$ and $\mathcal{D}'$ are separated and to some direction in which $\mathcal{D}$ is localised. Thus, for any particular implementations of $\bar{X}_j$ and $\bar{Z}_j$ we can deform $\bar{U}$ such that $\text{supp}(\bar{P}) \subseteq\text{supp}(\bar{X}_j) \cap \text{supp}(\bar{U})$ does not connect defects $\mathcal{D}$ and $\mathcal{D}'$ and $ \text{supp}(\bar{Q}) \subseteq\text{supp}(\bar{Z}_j) \cap \text{supp}(\bar{U})$ does not enclose $\mathcal{D}$ or $\mathcal{D}'$. Hence, $a_j=d_j=0$.

Thus, $\bar{P}=\alpha\bar{I}$ and $\bar{Q}=\beta\bar{I}$ and so the commutators of $\bar{U}$ with all single qubit logical operators is trivial up to a phase. Thus, $\bar{U}$ is a logical Pauli operator.
\end{proof}
 
We now present and prove our main result.

\begin{theorem}
The set of logical operators implementable by products of braiding and locality-preserving logical operator on logical qubits encoded in defects is contained in the Clifford group. \label{Theorem1}
\end{theorem}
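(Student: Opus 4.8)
The plan is to assemble the theorem directly from Lemma \ref{Lem2} and Lemma \ref{Lem1} together with two structural facts already recorded in the setup: that in the defect encoding the logical Pauli operators are themselves TLPLOs, and that locality-preserving logical operators conjugate TLPLOs to TLPLOs (equivalently, the TLPLOs form a normal subgroup of the group of locality-preserving logical operators). With these in hand, the entire content of the theorem is the single observation that membership in the Clifford group is exactly the property of mapping the logical Pauli group into itself under conjugation, so it suffices to show that an arbitrary product of braiding and locality-preserving logical operators does this.

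Concretely, I would write such a product as $\bar{V}=\bar{G}_1\bar{G}_2\cdots\bar{G}_m$, where each $\bar{G}_i$ is either a braiding logical operator or a locality-preserving logical operator; this composition is well defined as a logical operator on the encoded qubits because each braiding logical operator is a code deformation returning the code to its original configuration, while locality-preserving logical operators already act on a fixed code. I would then take an arbitrary logical Pauli operator $\bar{P}$, which is a TLPLO by the encoding, and argue by induction on $m$ that $\bar{V}\bar{P}\bar{V}^\dagger$ is a TLPLO: at each step, conjugation by $\bar{G}_i$ preserves the class of TLPLOs, by Lemma \ref{Lem2} when $\bar{G}_i$ is a braiding operator and by the normal-subgroup property when $\bar{G}_i$ is locality-preserving. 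Having established that $\bar{V}\bar{P}\bar{V}^\dagger$ is a TLPLO, I would apply Lemma \ref{Lem1} to conclude that it is a logical Pauli operator; since $\bar{P}$ was arbitrary, $\bar{V}$ lies in the Clifford group, which is the claim.

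I do not expect a substantive obstacle, since the real work is carried by the two lemmas; the step deserving the most care is purely bookkeeping, namely confirming that the chain of equivalences ``up to local operators'' appearing in Lemma \ref{Lem2} composes cleanly across the $m$ factors of $\bar{V}$ — this follows because logical equivalence up to local operators is stable under further code deformation and under multiplication by further locality-preserving logical operators, so no accumulation of error in the logical action occurs. One should also note that the argument makes no assumption on $\bar{V}$ beyond its being built from these two primitives, so the statement is insensitive to the spatial dimension $D$ and to the exotic defect phenomena enumerated earlier.
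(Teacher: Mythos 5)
Your proposal is correct and follows essentially the same route as the paper: it combines Lemma \ref{Lem2}, the normality of the TLPLO group under locality-preserving logical operators, the fact that logical Paulis are TLPLOs, and Lemma \ref{Lem1} to show that any product of the two primitives normalises the logical Pauli group and hence lies in the Clifford group. Your explicit induction over the factors $\bar{G}_1,\ldots,\bar{G}_m$ is just a slightly more formal rendering of the paper's ``sequence of conjugations'' step.
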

\begin{proof}
By Lemma \ref{Lem2}, if $\bar{B}$ is a braiding logical operator and $\bar{U}$ is a TLPLO, then $\bar{B}\bar{U}\bar{B}^\dag$ is a TLPLO. Also, as noted above, if $\bar{L}$ is a locality-preserving logical operator then $\bar{L}U\bar{L}^\dag$ is a TLPLO. If $\bar{A}$ is a product of braiding and locality-preserving logical operators, it acts on $\bar{U}$ by a sequence of conjugations by braiding or locality-preserving logical operators. Thus, $\bar{A}$ maps TLPLOs to TLPLOs under conjugation. By Lemma \ref{Lem1}, the group of TLPLOs is contained in the logical Pauli group and, as all logical Pauli operators are TLPLOs, this implies that the group of TLPLOs is equal to the logical Pauli group. Hence, $\bar{A}$ maps logical Pauli operators to logical Pauli operators under conjugation, and so is contained in the Clifford group.
\end{proof}

\textit{Discussion.---} Theorem \ref{Theorem1} shows that the set of operators implementable by products of braiding and locality-preserving logical operators is highly restricted. In particular, we note that since the Clifford group is finite, it cannot be universal for quantum computing. Schemes based on braiding defects in two dimensional topological stabiliser codes to implement the Clifford group (such as with twists in the surface code \cite{Bombin,Brown}) have long been known, and our result shows that such schemes cannot be improved upon within this framework. 

While this conclusion is consistent with previous results in two dimensional topological stabiliser codes \cite{Bravyi}, it is surprisingly restrictive in higher dimensions. Specifically, we note that for any spatial dimension $D\geq 2$, there exist $D$-dimensional topological stabiliser codes (without defects) that admit locality-preserving logical operators that are strictly in the $D$th level of the Clifford hierarchy \cite{Kubica,Webster}. One might expect that an analogous result would hold for braiding defects -- that a corresponding defect scheme could be constructed in a $D$-dimensional code that allowed a braiding logical operator in the $D$th level of the Clifford hierarchy. This expectation is furthered by the greater range of defects and braiding phenomena possible in higher dimensional codes~\cite{Mesaros,Else,Wang,Yoshida}. However, our result shows that all braiding logical operators are confined to the second level of the Clifford hierarchy, and hence that this is not possible.

The restrictiveness of our results can be understood to reflect a tradeoff between locality-preserving logical operators and braiding logical operators. Specifically, because braiding logical operators permute the group of TLPLOs (as shown in Lemma \ref{Lem2}), the structure of this group constrains the power of braiding in a model. In particular, in our framework the group of TLPLOs is the logical Pauli group (as shown in Lemma \ref{Lem1}), and it is this structure that restricts braiding logical operators to the Clifford group. Interestingly, this tradeoff is reminscent of that identified for gates implemented by braiding anyons in two-dimensional topological quantum field theories in Ref.~\cite{Beverland}. Indeed, that work observed that models that admit the Pauli group as locality-preserving logical operators have operators implemented by braiding anyons contained in the Clifford group, analogous to our result.

Our result is highly restrictive on achieving universal fault-tolerant quantum computing with such codes, but we can nonetheless consider how our no-go result may be circumvented. Considering the identified tradeoff, one approach is to sacrifice braiding logical operators to allow for a larger group of locality-preserving logical operators. This is realised by the standard encoding of logical qubits into defect-free, higher-dimensional topological stabiliser codes. Such an encoding does not permit braiding, but allows for non-Clifford locality-preserving logical operators \cite{Bravyi,Kubica,Webster}, which we have shown are not admitted by qubits encoded in defects. However, it is known that the group of locality-preserving logical operators in a topological stabiliser code cannot be universal \cite{Bravyi}.

Alternatively, one may seek a more general approach to encoding qubits in defects that is optimised for braiding logical operators, at the expense of locality-preserving logical operators. In particular, the assumption that all logical $\bar{Z}$ operators must be implementable as TLPLOs could be removed. However, this is challenging for two reasons. First, because a sufficient (but not necessary) condition for $\bar{Z}$ to be a TLPLO is that the topological excitation used to define the computational basis for the encoding is an eigenstate excitation, such an approach requires direct use of non-eigenstate excitations to define the encoding of logical qubits, which has no precedent in topological stabiliser codes with or without defects. Second, without a locality-preserving implementation of $\bar{Z}$, the application and measurement of logical Pauli operators would be significantly more challenging, that would make such a scheme incompatible with standard approaches to quantum computing which assume the simplicity of such operations. In our companion paper \cite{WebsterLong}, we consider defect schemes in which we remove this assumption, and allow for encodings into any number of defects. We find that the set of gates implementable by combinations of braiding and locality-preserving logical operators still cannot be universal, even in this highly generalised framework.

We must conclude that braiding defects is insufficient to allow us to overcome the limitations of no-go theorems such as that of Eastin-Knill \cite{Eastin} and Bravyi-Konig \cite{Bravyi}. To achieve universality in topological codes, we must incorporate additional techniques such as magic state distillation \cite{MSD}, stabiliser state injection \cite{YoderBacon,Vasmer,WebsterLong}, dimensional jumping \cite{BombinDJ}, just-in-time decoding \cite{BombinJIT,BrownJIT} or topological charge measurement \cite{Cong}. Such techniques require adaptive implementation of logical operators and so are outside the scope of all our results \cite{BombinSingleShot,WebsterLong}.

\begin{acknowledgments}
This work is supported by the ARC via projects CE170100009 and DP170103073. PW acknowledges support from The University of Sydney Nano Institute via The John Makepeace Bennett Gift.
\end{acknowledgments}

\end{document}